\journal{}
\newtheorem{definition}{Definition}
\newtheorem{proposition}{Proposition}
\newtheorem{theorem}{Theorem}
\begin{document}

\begin{frontmatter}

\title{On the implementation of zero-determinant strategies in repeated games}

\author{Masahiko Ueda\corref{mycorrespondingauthor}}
\address{Graduate School of Sciences and Technology for Innovation, Yamaguchi University, Yamaguchi 753-8511, Japan}
\ead{m.ueda@yamaguchi-u.ac.jp}

\begin{abstract}
Zero-determinant strategies are a class of strategies in repeated games which unilaterally control payoffs.
Zero-determinant strategies have attracted much attention in studies of social dilemma, particularly in the context of evolution of cooperation.
So far, not only general properties of zero-determinant strategies have been investigated, but zero-determinant strategies have been applied to control in the fields of information and communications technology and analysis of imitation.
Here, we further deepen our understanding on general mathematical properties of zero-determinant strategies.
We first prove that zero-determinant strategies, if exist, can be implemented by some one-dimensional transition probability.
Next, we prove that, if a two-player game has a non-trivial potential function, a zero-determinant strategy exists in its repeated version.
These results assist us to implement zero-determinant strategies in broader situations.
\end{abstract}

\begin{keyword}
Repeated games; Zero-determinant strategies; Potential games
\end{keyword}

\end{frontmatter}


\section{Introduction}
\label{sec:intro}
Zero-determinant (ZD) strategies are a class of strategies in repeated games which unilaterally enforce linear relations between payoffs.
They were originally discovered in the infinitely repeated prisoner's dilemma games \cite{PreDys2012}.
One of the examples is the equalizer strategy, which unilaterally controls the payoff of the opponent.
Another example is the extortionate strategy, which cannot be beaten.
There is also a ZD strategy called the generous ZD strategy, which cannot win but promotes mutual cooperation \cite{StePlo2013}.
After the discovery of ZD strategies, properties of ZD strategies have extensively been investigated in the context of evolution of cooperation \cite{HNS2013,AdaHin2013,StePlo2013,HNT2013,Aki2016,CWF2022,CheFu2023}, including evolution on structured populations \cite{SzoPer2014,SzoPer2014b,CGCF2024}, together with human experiments of ZD strategies \cite{HRM2016,WZLZX2016,BecMil2019}.
Furthermore, beyond the prisoner's dilemma game, ZD strategies were also discovered in multi-player social dilemma games \cite{HWTN2014,PHRT2015} and a continuous donation game \cite{McAHau2016}.
These results have suggested that ZD strategies can be used to control payoffs in a broader class of social dilemma situations.

At the same time as such investigation of ZD strategies in social dilemma situations, general properties of ZD strategies have also been studied.
For example, for games with a discount factor, it has been recognized that the range of possible ZD strategies shrinks compared with no discounting case \cite{HTS2015,McAHau2016,IchMas2018,GovCao2020}.
It was also found that effect of imperfect monitoring about the previous actions reduces the range of their existence, compared to perfect monitoring case \cite{HRZ2015,UedTan2020}.
Difference between these two effects in the prisoner's dilemma game was detailly investigated in Ref. \cite{MamIch2020}.
Existence of ZD strategies in asynchronous-update social dilemma games was also shown in Ref. \cite{McAHau2017}.
Furthermore, ZD strategies in general multi-player multi-action games were introduced in Ref. \cite{HDND2016}, and their mathematical properties in situations where several players use ZD strategies were also studied \cite{UedTan2020}, which found that possible ZD strategies are constrained by consistency of linear payoff relations.
A necessary and sufficient condition for the existence of ZD strategies was recently specified \cite{Ued2022b}.
Moreover, there are several attempts to extend the ability of payoff control of ZD strategies so as to control the moments of payoffs \cite{Ued2021}, time correlation functions of payoffs \cite{Ued2021b}, and conditional expectations of payoffs \cite{Ued2022c}.

Apart from these theoretical studies on general properties of ZD strategies, there are many applications of ZD strategies.
In the field of information and communications technology, ZD strategies have been applied in resource sharing in wireless networks \cite{DKL2014,ZNSet2016}, smart grid systems \cite{FHAet2015}, mining in blockchain \cite{TLYet2019,SWHet2020}, crowdsourcing \cite{TLCet2019,HWMet2019}, mobile crowdsensing \cite{HWCet2019}, the Internet of Things \cite{WSHet2019}, cloud computing \cite{ZTYet2020}, federated learning \cite{HWXet2021}, and data trading \cite{WSHet2021}, in order to unilaterally realize some favorable states.
Another field of application is analysis of imitation in behavioral science \cite{Ued2022,Ued2023}.
Originally, in the prisoner's dilemma game, it was pointed out that the Tit-for-Tat (TFT) strategy \cite{RCO1965,AxeHam1981}, which imitates the previous action of the opponent, is a ZD strategy which unilaterally equalizes the payoffs of two players \cite{PreDys2012}.
Because it has been known that TFT is unbeatable in two-player symmetric potential games \cite{DOS2014}, relation between unbeatability and ZD strategies was focused on, and it was indeed shown that unbeatable TFT is a ZD strategy \cite{Ued2022}.
Furthermore, relation between the existence of unbeatable imitation \cite{DOS2012b} and that of unbeatable ZD strategies has also been shown in two-player symmetric games \cite{Ued2022b} and in multi-player totally symmetric games \cite{Ued2023}.
These application studies show that the concept of ZD strategies can be useful as methods to control multi-agent systems and to analyze human behavior.

Although the concept of ZD strategy has been widely used, its general mathematical properties have not yet been sufficiently clear.
Clarifying its properties will enable us to implement ZD strategies in broader situations which were not in the scope of application of ZD strategies \cite{XWPW2023,CDPP2024}.
In this paper, we further clarify mathematical properties of ZD strategies.
Particularly, we first show that ZD strategies, if exist, can be realized by one-dimensional transition probability.
This result enables us to implement ZD strategies even if an action set is restricted to one-dimension space.
Second, we prove that ZD strategies controlling payoffs exist in two-player non-trivial potential games.
This result can be regarded as an extension of the result in two-player symmetric potential games \cite{Ued2022} to asymmetric cases.

This paper is organized as follows.
In Section \ref{sec:preliminaries}, we introduce basic concepts such as repeated games, zero-determinant strategies, and potential games.
In Section \ref{sec:physicalZDS}, we prove our first theorem about implementation of ZD strategies by one-dimensional transition probability.
In Section \ref{sec:ZDS_potential}, we provide our second theorem about the existence of ZD strategies controlling payoffs in two-player non-trivial potential games.
This result is also demonstrated by using an asymmetric prisoner's dilemma game in this section.
In Section \ref{sec:discussion}, we discuss relations between our results and previous results.
Section \ref{sec:conclusion} is devoted to concluding remarks.

\section{Preliminaries}
\label{sec:preliminaries}

\subsection{Repeated games}
A game in strategic form is defined as $G:=\left( \mathcal{N}, \left\{ A_j \right\}_{j\in \mathcal{N}}, \left\{ s_j \right\}_{j\in \mathcal{N}} \right)$, where $\mathcal{N}$ is the set of players, $A_j$ is the set of actions of player $j\in \mathcal{N}$, and $s_j: \prod_{j\in \mathcal{N}}A_j \rightarrow \mathbb{R}$ is the payoff function of player $j\in \mathcal{N}$ \cite{FudTir1991,OsbRub1994}.
We consider the situation that $\mathcal{N} := \left\{ 1, \cdots, N \right\}$ and $A_j:=\left\{ b_j^{(1)}, \cdots, b_j^{(L_j)} \right\}$ for all $j\in \mathcal{N}$, where $L_j \in \mathbb{N}$ is the number of action of player $j$.
For simplicity, we write $\mathcal{A}:=\prod_{j=1}^N A_j$ and $\bm{a}:=\left( a_1, \cdots,  a_N \right)\in \mathcal{A}$, and call $\bm{a}$ an action profile.
Below we also use notation $a_{-j}:= \left( a_1, \cdots, a_{j-1}, a_{j+1}, \cdots, a_N \right)$ and $A_{-j}:=\prod_{k\neq j} A_k$.
When we want to emphasize $a_j$ in $\bm{a}$, we write $\bm{a}=\left( a_j, a_{-j} \right)$.

When $G$ is infinitely repeated and players choose actions at each round according to all histories of actions, such game is called an (infinitely) repeated game.
We write an action of player $j$ at round $t\geq 1$ as $a_j^{(t)}$.
Below $\Delta_L$ denotes a probability $L$-simplex.
As strategies in repeated games, we consider behavior strategies $\mathcal{T}_j := \left\{ T^{(t)}_j \right\}_{t=1}^\infty$ $(\forall j\in \mathcal{N})$, where $T^{(t)}_j: \mathcal{A}^{t-1} \to \Delta_{L_j}$ with $t\geq 2$ is the conditional probability at $t$-th round and $T^{(1)}_j \in \Delta_{L_j}$ is the initial probability.
Similarly to the original work \cite{PreDys2012}, in this paper, we only consider the case that payoffs in the infinitely repeated game are given by the time-averaged payoffs (that is, no discounting)
\begin{align}
 \mathcal{S}_j &:= \lim_{T\rightarrow \infty} \frac{1}{T} \sum_{t=1}^T \mathbb{E} \left[ s_j\left( \bm{a}^{(t)} \right) \right] \quad (\forall j\in \mathcal{N}),
 \label{eq:payoff_repeated}
\end{align}
where $\mathbb{E}[\cdots]$ is the expectation of the quantity $\cdots$ with respect to strategies of all players.

We write the joint probability distribution of action profiles from first round to $t$-th round as $P\left( \bm{a}^{(t)}, \cdots, \bm{a}^{(1)} \right)$.
We also define the marginal probability distribution of an action profile at $t$-th round obtained from the joint probability distribution by
\begin{align}
 P_t \left( \bm{a}^{(t)} \right) &:= \sum_{\bm{a}^{(t-1)}} \cdots \sum_{\bm{a}^{(1)}} P\left( \bm{a}^{(t)}, \cdots, \bm{a}^{(1)} \right) \quad \left( \forall \bm{a}^{(t)} \in \mathcal{A} \right).
\end{align}
If the limit probability distribution
\begin{align}
 P^* \left( \bm{a} \right) &:= \lim_{T\rightarrow \infty} \frac{1}{T} \sum_{t=1}^T P_{t} \left( \bm{a} \right) \quad (\forall \bm{a} \in \mathcal{A})
 \label{eq:P_star}
\end{align}
exists, we write the expectation of the quantity $D:\mathcal{A}\rightarrow \mathbb{R}$ with respect to $P^*$ by $\left\langle D \right\rangle^{*}$.
For such case, the payoffs (\ref{eq:payoff_repeated}) are described as $\mathcal{S}_k=\left\langle s_k \right\rangle^{*}$ for all $k \in \mathcal{N}$.

\subsection{Zero-determinant strategies}
Here, we define a \emph{time-independent memory-one strategy} of player $j\in \mathcal{N}$ as a strategy such that
\begin{align}
 T^{(t)}_j\left( a_j^{(t)} | \bm{a}^{(t-1)}, \cdots, \bm{a}^{(1)} \right) &= T_j \left( a_j^{(t)} | \bm{a}^{(t-1)} \right) \quad (\forall t\geq 2)
\end{align}
for all $a_j^{(t)}$, $\bm{a}^{(t-1)}$, $\cdots$, $\bm{a}^{(1)}$, where $T_j: \mathcal{A} \to \Delta_{L_j}$.
That is, a player using time-independent memory-one strategy chooses an action in each round only by using information of a previous action profile.
Furthermore, for time-independent memory-one strategies $T_j$ of player $j$, we introduce the \emph{Press-Dyson vectors} \cite{Aki2016,McAHau2016,UedTan2020}
\begin{align}
 \hat{T}_j\left( a_j | \bm{a}^{\prime} \right) &:= T_j\left( a_j | \bm{a}^{\prime} \right) -  \delta_{a_j, a^{\prime}_j} \quad \left( \forall a_j \in A_j, \forall \bm{a}^{\prime} \in \mathcal{A} \right),
 \label{eq:PD}
\end{align}
where $\delta_{a, a^\prime}$ is the Kronecker delta.
Note that the Press-Dyson vectors describe the difference between the time-independent memory-one strategy $T_j$ and the strategy which repeats one's previous action (the Repeat strategy \cite{Aki2016}) $\delta_{a_j, a^{\prime}_j}$.

We remark that, because $T_j$ is probability, Press-Dyson vectors generally satisfy
\begin{align}
 \sum_{a_j} \hat{T}_j \left( a_j | \bm{a}^\prime \right) &= 0 \quad \left( \forall \bm{a}^\prime \right)
 \label{eq:PD_normalized}
\end{align}
and
\begin{align}
  & \left\{
  \begin{array}{ll}
    -1 \leq \hat{T}_j \left( a_j | \bm{a}^\prime \right) \leq 0 & \left(a_j = a^\prime_j \right) \\
    0 \leq \hat{T}_j \left( a_j | \bm{a}^\prime \right) \leq 1 & \left(a_j \neq a^\prime_j \right).
  \end{array}
  \right.
  \label{eq:PD_range}
\end{align}
Now we introduce zero-determinant strategies in repeated games \cite{PreDys2012,McAHau2016,Ued2022c}.
\begin{definition}
\label{def:ZDS}
A time-independent memory-one strategy of player $j$ is an (extended) \emph{zero-determinant (ZD) strategy} controlling the quantity $B:\mathcal{A}\rightarrow \mathbb{R}$ when its Press-Dyson vectors can be written in the form
\begin{align}
 \sum_{a_j} c_{a_j} \hat{T}_j\left( a_j | \bm{a}^{\prime} \right) &= B \left( \bm{a}^{\prime} \right) \quad \left( \forall \bm{a}^{\prime} \in \mathcal{A} \right)
 \label{eq:ZDS}
\end{align}
with some nontrivial coefficients $\left\{ c_{a_j} \right\}$ (that is, not $c_{b_1}=\cdots=c_{b_{L_j}}=\mathrm{const.}$) and $B$ is not identically zero.
\end{definition}

Counterintuitively, the ZD strategy unilaterally enforces
\begin{align}
 \left\langle B \right\rangle^* &= 0,
 \label{eq:linear}
\end{align}
that is, $\left\langle B \right\rangle^*=0$ holds independently of the strategies of other players \cite{PreDys2012,McAHau2016,Ued2022c}.
Originally, ZD strategies controlling $B(\bm{a}) = \sum_{j=1}^N \alpha_j s_j(\bm{a}) + \alpha_0$ with some coefficients $\left\{ \alpha_j \right\}_{j=0}^N$ have been investigated.

An existence condition of ZD strategies is given by the following proposition:
\begin{proposition}[\cite{Ued2022b}]
\label{prop:existence}
A ZD strategy of player $j$ controlling $B$ exists if and only if there exist two different actions $\overline{a}_j, \underline{a}_j \in A_j$ of player $j$ such that
\begin{align}
 B \left( \overline{a}_j, a_{-j} \right) &\geq 0 \quad \left( \forall a_{-j} \right) \nonumber \\
 B \left( \underline{a}_j, a_{-j} \right) &\leq 0 \quad \left( \forall a_{-j} \right),
 \label{eq:condition_exsitence}
\end{align}
and $B$ is not identically zero.
\end{proposition}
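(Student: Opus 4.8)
The statement is an equivalence, so the plan is to prove the two implications separately, using a single algebraic device in both directions. The device is the normalization identity (\ref{eq:PD_normalized}): since $\sum_{a_j} \hat{T}_j\left( a_j | \bm{a}^{\prime} \right) = 0$, the defining equation (\ref{eq:ZDS}) can be rewritten, for any constant $\lambda$, as $B\left( \bm{a}^{\prime} \right) = \sum_{a_j}\left( c_{a_j} - \lambda \right)\hat{T}_j\left( a_j | \bm{a}^{\prime} \right)$. This lets me subtract off whichever coefficient is convenient before reading off sign information.

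For the ``only if'' direction, I would assume a ZD strategy with nontrivial coefficients $\left\{ c_{a_j} \right\}$ exists, and let $\overline{a}_j$ attain the minimum of $c_{a_j}$ and $\underline{a}_j$ attain the maximum; nontriviality forces $c_{\underline{a}_j} > c_{\overline{a}_j}$, so $\overline{a}_j \neq \underline{a}_j$. Fixing an arbitrary $a_{-j}$ and evaluating the rewritten equation at $\bm{a}^{\prime} = \left( \overline{a}_j, a_{-j} \right)$ with $\lambda = c_{\overline{a}_j}$ eliminates the diagonal term and leaves $\sum_{a_j \neq \overline{a}_j}\left( c_{a_j} - c_{\overline{a}_j} \right)\hat{T}_j\left( a_j | \overline{a}_j, a_{-j} \right)$. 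Here each factor $c_{a_j} - c_{\overline{a}_j}$ is nonnegative by minimality and each off-diagonal $\hat{T}_j$ is nonnegative by (\ref{eq:PD_range}), so $B\left( \overline{a}_j, a_{-j} \right) \geq 0$. The symmetric choice $\lambda = c_{\underline{a}_j}$ at $\bm{a}^{\prime} = \left( \underline{a}_j, a_{-j} \right)$ gives $B\left( \underline{a}_j, a_{-j} \right) \leq 0$, which is the claimed condition.

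For the ``if'' direction I would exhibit an explicit ZD strategy. Set $M := \max_{\bm{a}} \left| B\left( \bm{a} \right) \right|$, which is positive because $B$ is not identically zero and $\mathcal{A}$ is finite, and choose coefficients $c_{\overline{a}_j} = 0$, $c_{\underline{a}_j} = 2M$, and $c_{a_j} = M$ for every remaining action; these are nontrivial. For each $\bm{a}^{\prime} = \left( a^{\prime}_j, a_{-j} \right)$ I specify a Press-Dyson vector supported on just two actions: move mass $-r$ off the diagonal entry $a^{\prime}_j$ and $+r$ onto $\underline{a}_j$ when $B\left( \bm{a}^{\prime} \right) \geq 0$, or onto $\overline{a}_j$ when $B\left( \bm{a}^{\prime} \right) < 0$, then solve the resulting single linear equation for $r$. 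Every relevant coefficient gap ($2M$, $M$, or $M$) is at least $M \geq \left| B\left( \bm{a}^{\prime} \right) \right|$, which forces the required $r$ into $[0,1]$, so the constructed $\hat{T}_j$ automatically obeys both (\ref{eq:PD_normalized}) and (\ref{eq:PD_range}).

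I expect the backward construction to be the main obstacle. The forward direction is essentially forced once the normalization trick is applied, but in the construction one must handle action profiles whose pivot action $a^{\prime}_j$ is neither $\overline{a}_j$ nor $\underline{a}_j$: there $B\left( \bm{a}^{\prime} \right)$ may take either sign, so the routing target must be selected according to that sign, and the scale $2M$ must be large enough that the transition probabilities stay admissible for \emph{all} such profiles at once. Verifying this admissibility uniformly in $a_{-j}$ and in the intermediate pivots is the delicate bookkeeping step.
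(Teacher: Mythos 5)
Your proof is correct in both directions. Note first that the paper itself does not prove Proposition~\ref{prop:existence}: it is imported from Ref.~\cite{Ued2022b}, so there is no in-paper proof to match against. Your ``only if'' argument (shift the coefficients by $\lambda$ using Eq.~(\ref{eq:PD_normalized}), take $\overline{a}_j$ and $\underline{a}_j$ to be an argmin and an argmax of $\left\{ c_{a_j} \right\}$, and read off the signs from Eq.~(\ref{eq:PD_range})) is airtight. Your ``if'' construction is genuinely different from the construction the paper does give in the proof of Theorem~\ref{thm:physicalZDS}, which establishes a strengthening of that implication: there the Press--Dyson vectors move probability mass only between nearest-neighbour actions (after relabelling so that $\overline{a}_j=1$ and $\underline{a}_j=L_j$), with coefficients $c_{a_j}=a_j$ and admissibility controlled by $W=\max_{\bm{a}}|B(\bm{a})|$. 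You instead route mass in a single jump from the current action $a^{\prime}_j$ directly to $\underline{a}_j$ or $\overline{a}_j$ according to the sign of $B(\bm{a}^{\prime})$, with coefficients $0$, $M$, $2M$. The bookkeeping you flag as delicate in fact closes cleanly: the hypothesis (\ref{eq:condition_exsitence}) guarantees that the pivot $\overline{a}_j$ only ever needs to send mass toward $\underline{a}_j$ (since $B\geq 0$ there) and $\underline{a}_j$ only ever toward $\overline{a}_j$ (since $B\leq 0$ there), the case $B(\bm{a}^{\prime})=0$ just gives $r=0$, and every coefficient gap used is at least $M\geq|B(\bm{a}^{\prime})|$, so $r\in[0,1]$ uniformly in $\bm{a}^{\prime}$. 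Your construction is simpler (two-point support) and closer in spirit to the original one in \cite{Ued2022b}, which the paper explicitly notes is \emph{not} one-dimensional; the paper's Theorem~\ref{thm:physicalZDS} construction pays for its extra case analysis by delivering the one-dimensionality property that is the entire point of that theorem.
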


\subsection{Potential games}
The concept of potential games was originally introduced in Ref. \cite{MonSha1996} inspired by physics.
\begin{definition}
\label{def:potential}
A game $G$ is an \emph{(exact) potential game} when there exists a common function $\Phi(\bm{a})$ satisfying
\begin{align}
 s_j(a_j, a_{-j}) - s_j(a^\prime_j, a_{-j}) &= \Phi(a_j, a_{-j}) - \Phi(a^\prime_j, a_{-j}) \quad (\forall a_j, \forall a^\prime_j, \forall a_{-j})
 \label{eq:potential}
\end{align}
for all player $j$.
\end{definition}
The function $\Phi$ is called a potential function.
It should be noted that the Nash equilibrium $\bm{a}^*$ is defined by the condition
\begin{align}
 s_j\left( a_j^*, a_{-j}^* \right) &\geq s_j \left( a_j, a_{-j}^* \right) \quad (\forall j, \forall a_j).
\end{align}
For a potential game, since this condition is rewritten as
\begin{align}
 \Phi \left( a_j^*, a_{-j}^* \right) &\geq \Phi \left( a_j, a_{-j}^* \right) \quad (\forall j, \forall a_j),
\end{align}
Nash equilibria are realized as local maxima of a potential function.
A sequence of actions improving the payoffs can be regarded as that increasing the value of a potential function.

For subsets $\forall \mathcal{N}^\prime \subseteq \mathcal{N}$, we write $a_{\mathcal{N}^\prime} := \left( a_j \right)_{j \in \mathcal{N}^\prime}$ and $A_{\mathcal{N}^\prime} := \prod_{j \in \mathcal{N}^\prime} A_j$.
Ui proved that a potential function can be described as the sum of interaction potentials.
\begin{proposition}[\cite{Ui2000}]
\label{prop:interaction}
$G$ is a potential game if and only if there exist interaction potentials $\left\{ \phi_{\mathcal{N}^\prime} | \phi_{\mathcal{N}^\prime}: A_{\mathcal{N}^\prime} \rightarrow \mathbb{R}, \mathcal{N}^\prime \subseteq \mathcal{N}  \right\}$ such that
\begin{align}
 s_j \left( \bm{a} \right) &= \sum_{\mathcal{N}^\prime \subseteq \mathcal{N}}^{j \in \mathcal{N}^\prime} \phi_{\mathcal{N}^\prime} \left( a_{\mathcal{N}^\prime} \right)
\end{align}
for all $j \in \mathcal{N}$ and all $\bm{a} \in \mathcal{A}$.
A potential function is given by
\begin{align}
 \Phi \left( \bm{a} \right) &= \sum_{\mathcal{N}^\prime \subseteq \mathcal{N}} \phi_{\mathcal{N}^\prime} \left( a_{\mathcal{N}^\prime} \right).
\end{align}
\end{proposition}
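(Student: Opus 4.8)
The plan is to prove the two implications of the equivalence, working throughout with the coalition (M\"obius / inclusion--exclusion) decomposition of functions on $\mathcal{A}$ relative to a fixed reference profile $\bm{r}=(r_1,\dots,r_N)$. The facts I will use are standard properties of this decomposition: every $f:\mathcal{A}\to\mathbb{R}$ admits a unique expansion $f=\sum_{\mathcal{M}\subseteq\mathcal{N}}f^{\mathcal{M}}$ with $f^{\mathcal{M}}$ depending only on $a_{\mathcal{M}}$ and given by $f^{\mathcal{M}}(a_{\mathcal{M}})=\sum_{\mathcal{S}\subseteq\mathcal{M}}(-1)^{|\mathcal{M}|-|\mathcal{S}|}f(a_{\mathcal{S}},r_{-\mathcal{S}})$, where $(a_{\mathcal{S}},r_{-\mathcal{S}})$ agrees with $\bm{a}$ on $\mathcal{S}$ and with $\bm{r}$ off $\mathcal{S}$; and, by uniqueness, $f$ is independent of $a_j$ precisely when $f^{\mathcal{M}}=0$ for all $\mathcal{M}\ni j$.

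For the ``if'' direction I would argue directly. Given interaction potentials with $s_j(\bm{a})=\sum_{\mathcal{M}\ni j}\phi_{\mathcal{M}}(a_{\mathcal{M}})$, put $\Phi:=\sum_{\mathcal{M}\subseteq\mathcal{N}}\phi_{\mathcal{M}}$. When $a_j$ is changed to $a_j'$ with $a_{-j}$ held fixed, only the summands $\phi_{\mathcal{M}}$ with $j\in\mathcal{M}$ can change, and these are exactly the terms constituting $s_j$; hence $\Phi(a_j,a_{-j})-\Phi(a_j',a_{-j})=s_j(a_j,a_{-j})-s_j(a_j',a_{-j})$, which is the defining property \eqref{eq:potential}. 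This also establishes the asserted formula for a potential function.

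For the ``only if'' direction I would first convert the potential property into a symmetry of components. Since $s_j-\Phi$ is independent of $a_j$, the fact above gives $s_j^{\mathcal{M}}=\Phi^{\mathcal{M}}$ for every $\mathcal{M}\ni j$; in particular the component $s_j^{\mathcal{M}}$ is the same for all members $j\in\mathcal{M}$, a common function I call $u_{\mathcal{M}}$. It then remains to build $\{\phi_{\mathcal{M}}\}$ with $\sum_{\mathcal{M}\ni j}\phi_{\mathcal{M}}=s_j$ for every $j$. Comparing $\mathcal{L}$-components on both sides and writing $R=\mathcal{N}\setminus\mathcal{L}$ and $\mathcal{K}=\mathcal{M}\setminus\mathcal{L}$, the unknown $\mathcal{L}$-strata $y_{\mathcal{K}}$ of the potentials $\phi_{\mathcal{L}\cup\mathcal{K}}$ must satisfy $\sum_{\mathcal{K}\subseteq R}y_{\mathcal{K}}=u_{\mathcal{L}}$ (from the members $j\in\mathcal{L}$, for nonempty $\mathcal{L}$; the constant terms are fixed analogously) and $\sum_{\mathcal{K}\ni j}y_{\mathcal{K}}=s_j^{\mathcal{L}}$ for each $j\in R$. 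There are $2^{|R|}$ unknowns against only $1+|R|$ constraints, and the explicit choice $y_{\mathcal{K}}=0$ for $|\mathcal{K}|\ge 2$, $y_{\{j\}}=s_j^{\mathcal{L}}$ for $j\in R$, and $y_{\emptyset}=u_{\mathcal{L}}-\sum_{j\in R}s_j^{\mathcal{L}}$ satisfies them; assembling the strata over all $\mathcal{L}$ defines the required $\phi_{\mathcal{M}}$.

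I expect this construction to be the main obstacle. The tempting shortcut of taking the $\phi_{\mathcal{M}}$ to be the M\"obius coefficients of $\Phi$ itself fails: one finds $\sum_{\mathcal{M}\ni j}\phi_{\mathcal{M}}=\Phi(\bm{a})-\Phi(r_j,a_{-j})=s_j(\bm{a})-s_j(r_j,a_{-j})$, so it reproduces $s_j$ only up to the residual $s_j(r_j,a_{-j})$ --- the part of player $j$'s payoff independent of its own action. These residuals are not pinned down by the potential, and, because each $\phi_{\mathcal{M}}$ feeds into the payoff of every member of $\mathcal{M}$, they cannot be absorbed one player at a time without disturbing the others. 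What rescues the argument is exactly the slack in the component-wise systems above (more unknowns than equations), and checking that this slack always suffices --- i.e.\ that the explicit assignment is consistent across all coalitions --- is the step I would treat most carefully.
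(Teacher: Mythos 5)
The paper does not prove this proposition at all: it is imported verbatim from Ui (2000) and used as a black box, so there is no in-paper argument to compare yours against. Judged on its own, your proof is correct and essentially complete, modulo the standard existence and uniqueness of the M\"obius (ANOVA-type) decomposition relative to a reference profile $\bm{r}$, which you state accurately and which is routine to verify. The ``if'' direction is handled correctly. For the ``only if'' direction, your reduction is sound: the potential property is exactly the statement that $s_j-\Phi$ is independent of $a_j$, which forces $s_j^{\mathcal{M}}=\Phi^{\mathcal{M}}$ for all $\mathcal{M}\ni j$, and your explicit assignment of $\mathcal{L}$-strata does satisfy the resulting linear systems (I checked both the equation for $j\in\mathcal{L}$ and those for $j\in R$). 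Two remarks. First, the consistency worry you flag at the end is in fact vacuous: the systems for distinct $\mathcal{L}$ involve pairwise disjoint sets of unknowns (the $\mathcal{L}$-components of the $\phi_{\mathcal{M}}$ with $\mathcal{M}\supseteq\mathcal{L}$), so they are solved independently, and reassembling $\phi_{\mathcal{M}}=\sum_{\mathcal{L}\subseteq\mathcal{M}}\phi_{\mathcal{M}}^{\mathcal{L}}$ from normalized strata recovers exactly those strata by uniqueness. The only genuine consistency requirement is that the equations coming from different $j\in\mathcal{L}$ coincide, and that is precisely where the potential hypothesis ($s_j^{\mathcal{L}}=\Phi^{\mathcal{L}}=u_{\mathcal{L}}$ for $j\in\mathcal{L}$) enters; it would be worth saying so explicitly. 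Second, your observation that the naive choice $\phi_{\mathcal{M}}=\Phi^{\mathcal{M}}$ only reproduces $s_j(\bm{a})-s_j(r_j,a_{-j})$ is correct and identifies the real obstacle; the repair via the slack in the underdetermined per-coalition systems is legitimate.
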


\section{Realization of ZD strategies by one-dimensional transition probability}
\label{sec:physicalZDS}
\subsection{Main result}
First, we prove that, if ZD strategies exist, they can be implemented by one-dimensional transition probability.
Here, we call transition probability of $a_j$ \emph{one-dimensional} when transition of $a_j$ occurs only between nearest neighbor sites in $a_j$-coordinate.
We consider the situation that a ZD strategy of player $j$ controlling $B$ exists.
We can relabel the name of actions of player $j$ from $\left\{ b_j^{(1)}, \cdots, b_j^{(L_j)} \right\}$ to $\left\{ 1, \cdots, L_j \right\}$ so as to satisfy $\overline{a}_j=1$ and $\underline{a}_j=L_j$ in Eq. (\ref{eq:condition_exsitence}).
(The rest of actions is renamed arbitrarily.)
Below we redefine $A_j=\left\{ 1, \cdots, L_j \right\}$.

\begin{theorem}
\label{thm:physicalZDS}
If a ZD strategy of player $j$ controlling $B$ exists, it can be realized by one-dimensional transition probability.
\end{theorem}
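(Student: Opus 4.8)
The plan is to construct an explicit one-dimensional transition probability directly from the existence hypothesis, rather than to try to deform an arbitrary given ZD strategy into a nearest-neighbor one. Since a ZD strategy of player $j$ controlling $B$ is assumed to exist, Proposition~\ref{prop:existence} supplies two actions, which after the relabeling become $\overline{a}_j = 1$ and $\underline{a}_j = L_j$, satisfying the sign conditions (\ref{eq:condition_exsitence}), namely $B\left( 1, a_{-j} \right) \geq 0$ and $B\left( L_j, a_{-j} \right) \leq 0$ for all $a_{-j}$, with $B \not\equiv 0$. I would then read the action index as a one-dimensional coordinate and build a nearest-neighbor (birth--death) random walk on $\left\{ 1, \dots, L_j \right\}$ whose net drift reproduces $B$.

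The key simplification comes from a judicious choice of coefficients: I would set $c_{a_j} = \lambda\, a_j$ for a positive scale $\lambda$ fixed later, which is manifestly nontrivial. For a one-dimensional $T_j$, the only nonzero Press-Dyson components at a previous profile $\bm{a}^{\prime} = \left( k, a_{-j} \right)$ sit at $a_j \in \left\{ k-1, k, k+1 \right\}$, so using the normalization (\ref{eq:PD_normalized}) the defining relation (\ref{eq:ZDS}) telescopes to a pure drift,
\begin{align}
 \sum_{a_j} c_{a_j} \hat{T}_j\left( a_j | \bm{a}^{\prime} \right) &= \lambda \left[ \hat{T}_j\left( k+1 | \bm{a}^{\prime} \right) - \hat{T}_j\left( k-1 | \bm{a}^{\prime} \right) \right].
 \label{eq:drift_plan}
\end{align}
Writing the upward and downward transition probabilities as $p_k := \hat{T}_j\left( k+1 | \bm{a}^{\prime} \right)$ and $q_k := \hat{T}_j\left( k-1 | \bm{a}^{\prime} \right)$ (both implicitly depending on $a_{-j}$), with $\hat{T}_j\left( k | \bm{a}^{\prime} \right) = -\left( p_k + q_k \right)$, the ZD requirement becomes the single scalar equation $p_k - q_k = B\left( k, a_{-j} \right)/\lambda$ at each interior site. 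I would solve these by pushing the entire drift into one direction according to the sign of $B$: take $p_k = B/\lambda,\ q_k = 0$ where $B \geq 0$, and $p_k = 0,\ q_k = -B/\lambda$ where $B < 0$, which keeps $p_k, q_k \geq 0$ automatically. Choosing $\lambda \geq \max_{\bm{a}} \left| B\left( \bm{a} \right) \right|$, positive because $B \not\equiv 0$, then forces $p_k + q_k \leq 1$, so the range constraints (\ref{eq:PD_range}) hold and $T_j$ is a genuine probability supported on nearest neighbors.

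The hard part will be the two boundary sites, where one move is forbidden: at $k = 1$ there is no downward transition, so $q_1 = 0$ is forced, and at $k = L_j$ there is no upward transition, so $p_{L_j} = 0$ is forced. Consistency with the drift equation then demands $p_1 = B\left( 1, a_{-j} \right)/\lambda \geq 0$ and $q_{L_j} = -B\left( L_j, a_{-j} \right)/\lambda \geq 0$, which is precisely where the sign conditions (\ref{eq:condition_exsitence}) of the existence hypothesis are indispensable. I would verify the identity (\ref{eq:drift_plan}) separately at these two endpoints, where only two Press-Dyson components survive, to confirm that the boundary assignments still reproduce $B$. With the interior and boundary assignments thus matched, the resulting $T_j$ is one-dimensional, satisfies (\ref{eq:PD_normalized}) and (\ref{eq:PD_range}), and realizes the ZD strategy controlling $B$, completing the argument.
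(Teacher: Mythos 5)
Your proposal is correct and is essentially the paper's own construction: both build the same nearest-neighbor (birth--death) transition probability that steps toward larger $a_j$ with probability $\left| B \right|/W$ when $B>0$ and toward smaller $a_j$ when $B<0$, use the linear coefficients $c_{a_j} \propto a_j$ so that the Press--Dyson combination telescopes to the drift $p_k - q_k$, and invoke the sign conditions of Proposition~\ref{prop:existence} exactly where you do, namely to make the forced boundary assignments at $a_j=1$ and $a_j=L_j$ consistent. The only difference is notational: the paper packages the case split on the sign of $B$ into the decomposition $\left[ \bm{B} \right]_{a_j,\pm}$ rather than writing $p_k,q_k$ explicitly.
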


\begin{proof}
According to Proposition \ref{prop:existence}, a necessary and sufficient condition for the existence of a ZD strategy is given by Eq. (\ref{eq:condition_exsitence}), that is,
\begin{align}
 B \left( 1, a_{-j} \right) &\geq 0 \quad \left( \forall a_{-j} \right) \nonumber \\
 B \left( L_j, a_{-j} \right) &\leq 0 \quad \left( \forall a_{-j} \right)
 \label{eq:condition_exsitence_phys}
\end{align}
and $B$ is not identically zero.
Below we write the total number of action profiles as $M:= \prod_{k=1}^N L_k$.
For simplicity, we introduce a vector notation of the quantity $D:\mathcal{A}\rightarrow \mathbb{R}$ by $\bm{D}:= \left( D(\bm{a}) \right)_{\bm{a}\in \mathcal{A}} \in \mathbb{R}^M$.
It should be noted that a trivial identity
\begin{align}
 D(\bm{a}^\prime) &= \sum_{a_j \in A_j} \sum_{d=\pm 1} D(\bm{a}^\prime) \mathbb{I}(d D(\bm{a}^\prime)> 0) \mathbb{I}(a_j^\prime=a_j) 
\end{align}
holds for all $\bm{a}^\prime \in \mathcal{A}$, where $\mathbb{I}(\cdots)$ is an indicator function which returns $1$ if $\cdots$ holds, and $0$ otherwise.
By using the above vector notation, this identity is rewritten as
\begin{align}
 \bm{D} &= \sum_{a_j} \sum_{d=\pm 1} \left[ \bm{D} \right]_{a_j, \mathrm{sgn}(d)},
\end{align}
where we have defined
\begin{align}
 \left[ \bm{D} \right]_{a_j, \mathrm{sgn}(d)} &:= \left( D(\bm{a}^\prime) \mathbb{I}(d D(\bm{a}^\prime)> 0) \mathbb{I}(a_j^\prime=a_j) \right)_{\bm{a}^\prime \in \mathcal{A}} \nonumber \\
 & \quad \left( a_j\in A_j, d\in \{ 1,-1 \} \right).
 \label{eq:def_decomposition}
\end{align} 
For the quantity $\bm{B}\in \mathbb{R}^M$, our assumption (\ref{eq:condition_exsitence_phys}) implies
\begin{align}
 \bm{B} &= \sum_{a_j\neq 1, L_j} \sum_{d=\pm 1} \left[ \bm{B} \right]_{a_j, \mathrm{sgn}(d)} + \left[ \bm{B} \right]_{L_j, -} + \left[ \bm{B} \right]_{1, +}.
 \label{eq:decompose_B}
\end{align}
We also collectively write the Press-Dyson vectors of player $j$ by $\hat{\bm{T}}_j\left( a_j \right) := \left( \hat{T}_j\left( a_j | \bm{a}^{\prime} \right) \right)_{\bm{a}^\prime \in \mathcal{A}}$ for all $a_j \in A_j$.

We remark that the quantity
\begin{align}
 W &:= \max_{\bm{a}\in \mathcal{A}} \left| B(\bm{a}) \right|
 \label{eq:W}
\end{align}
satisfies $W\neq 0$ because of our assumption that $B$ is not identically zero.
We now consider the following strategy of player $j$:
\begin{align}
 \hat{\bm{T}}_j\left( 1 \right) &= \frac{1}{W} \left( - \left[ \bm{B} \right]_{1, +} - \left[ \bm{B} \right]_{2, -} \right) \nonumber \\
 \hat{\bm{T}}_j\left( a_j \right) &= \frac{1}{W} \left( - \left[ \bm{B} \right]_{a_j, +} + \left[ \bm{B} \right]_{a_j, -} - \left[ \bm{B} \right]_{a_j+1, -} + \left[ \bm{B} \right]_{a_j-1, +} \right) \nonumber \\
 & \quad \left( 2\leq a_j \leq L_j-1 \right) \nonumber \\
 \hat{\bm{T}}_j\left( L_j \right) &= \frac{1}{W} \left( \left[ \bm{B} \right]_{L_j, -} + \left[ \bm{B} \right]_{L_j-1, +} \right).
 \label{eq:strategy_physZD}
\end{align}
It should be noted that this strategy is different from a strategy originally introduced in Ref. \cite{Ued2022b} for the proof of Proposition \ref{prop:existence}, which is not one-dimensional.
We check that Eq. (\ref{eq:strategy_physZD}) is indeed Press-Dyson vectors of some strategy.
By the definitions (\ref{eq:def_decomposition}) and (\ref{eq:W}), it satisfies Eq. (\ref{eq:PD_range}).
Furthermore, we obtain
\begin{align}
 \sum_{a_j} \hat{\bm{T}}_j\left( a_j \right) &= \frac{1}{W} \left( - \sum_{a_j=1}^{L_j-1} \left[ \bm{B} \right]_{a_j, +} +  \sum_{a_j=2}^{L_j} \left[ \bm{B} \right]_{a_j, -} \right. \nonumber \\
 & \left.  -  \sum_{a_j=1}^{L_j-1} \left[ \bm{B} \right]_{a_j+1, -} +  \sum_{a_j=2}^{L_j} \left[ \bm{B} \right]_{a_j-1, +} \right) \nonumber \\
 &= \bm{0},
\end{align}
which implies Eq. (\ref{eq:PD_normalized}).

Next, we show that Eq. (\ref{eq:strategy_physZD}) is a ZD strategy.
By calculating
\begin{align}
 \sum_{a_j} a_j \hat{\bm{T}}_j\left( a_j \right) &= \frac{1}{W} \left( - \sum_{a_j=1}^{L_j-1} a_j \left[ \bm{B} \right]_{a_j, +} +  \sum_{a_j=2}^{L_j} a_j \left[ \bm{B} \right]_{a_j, -} \right. \nonumber \\
 & \left.  -  \sum_{a_j=1}^{L_j-1} a_j \left[ \bm{B} \right]_{a_j+1, -} +  \sum_{a_j=2}^{L_j} a_j \left[ \bm{B} \right]_{a_j-1, +} \right) \nonumber \\
 &= \frac{1}{W} \left( \sum_{a_j=1}^{L_j-1} \left[ \bm{B} \right]_{a_j, +} +  \sum_{a_j=2}^{L_j} \left[ \bm{B} \right]_{a_j, -} \right) \nonumber \\
 &= \frac{1}{W} \bm{B},
\end{align}
we can check that the strategy is indeed a ZD strategy (Definition \ref{def:ZDS}).

Finally, we check that Eq. (\ref{eq:strategy_physZD}) is realized by one-dimensional transition probability.
It is trivial from the definition (\ref{eq:def_decomposition}) and the definition of the Press-Dyson vectors.
Explicitly, each component of the transition probability of the ZD strategy (\ref{eq:strategy_physZD}) is
\begin{align}
 T_j\left( 1 | \bm{a}^\prime \right) &= \delta_{1, a^\prime_j} + \frac{1}{W} \left( - \left| B\left( \bm{a}^\prime \right) \right| \delta_{1, a^\prime_j} + \left| B\left( \bm{a}^\prime \right) \right| \mathbb{I}(B(\bm{a}^\prime)< 0) \delta_{2, a^\prime_j} \right) \nonumber \\
 T_j\left( a_j | \bm{a}^\prime \right) &= \delta_{a_j, a^\prime_j} + \frac{1}{W} \left( - \left| B\left( \bm{a}^\prime \right) \right| \delta_{a_j, a^\prime_j} + \left| B\left( \bm{a}^\prime \right) \right| \mathbb{I}(B(\bm{a}^\prime)< 0) \delta_{a_j+1, a^\prime_j} \right. \nonumber \\
 & \qquad \left. + \left| B\left( \bm{a}^\prime \right) \right| \mathbb{I}(B(\bm{a}^\prime)> 0) \delta_{a_j-1, a^\prime_j} \right) \nonumber \\
 & \quad \left( 2\leq a_j \leq L_j-1 \right) \nonumber \\
 T_j\left( L_j | \bm{a}^\prime \right) &= \delta_{L_j, a^\prime_j} + \frac{1}{W} \left( - \left| B\left( \bm{a}^\prime \right) \right| \delta_{L_j, a^\prime_j} + \left| B\left( \bm{a}^\prime \right) \right| \mathbb{I}(B(\bm{a}^\prime)> 0) \delta_{L_j-1, a^\prime_j} \right),
 \label{eq:strategy_physZD_comp}
\end{align}
which is one-dimensional.
\end{proof}

We remark that $\left\{ C\hat{\bm{T}}_j (a_j) \right\}_{a_j \in A_j}$ with $0<C\leq 1$, where $\left\{ \hat{\bm{T}}_j(a_j) \right\}_{a_j \in A_j}$ is given by Eq. (\ref{eq:strategy_physZD}), is also a ZD strategy, due to Definition \ref{def:ZDS}.
In addition, the ZD strategy (\ref{eq:strategy_physZD_comp}) increases $a_j$ when $B(\bm{a}^\prime)>0$ in the previous round and decreases $a_j$ when $B(\bm{a}^\prime)<0$ in the previous round under the boundary condition (\ref{eq:condition_exsitence_phys}).
This fact, which is similar to feedback control, provides an interpretation why a ZD strategy can unilaterally enforce Eq. (\ref{eq:linear}).
We also remark that each element of the transition probability (\ref{eq:strategy_physZD_comp}) depends only on the quantity at the previous state $\bm{a}^\prime$, like the trap model \cite{BouGeo1990}.

\subsection{Continuum limit}
Although informal, here we consider the continuum limit of this one-dimensional stochastic process.
When all players use time-independent memory-one strategies, the time evolution is described by the Markov chain
\begin{align}
 P_{t+1} \left( \bm{a} \right) &= \sum_{\bm{a}^\prime} T\left( \bm{a} | \bm{a}^\prime \right) P_{t} \left( \bm{a}^\prime \right)
 \label{eq:MC}
\end{align}
with the total transition probability
\begin{align}
  T\left( \bm{a} | \bm{a}^\prime \right) &:= \prod_{j=1}^N T_j\left( a_j | \bm{a}^\prime \right).
  \label{eq:transition}
\end{align}

We first consider the continuous time limit and then the continuous space limit.
For this purpose, we assume that transition in $T_j$ occurs with probability of order $\Delta t$ for all $j$:
\begin{align}
  T_j\left( a_j | \bm{a}^\prime \right) &= \delta_{a_j, a^\prime_j} + \Delta t R_j \left( a_j | \bm{a}^\prime \right),
  \label{eq:rate}
\end{align}
where $R_j$ is the transition rate.
It should be noted that this assumption provides the meaning of the Press-Dyson vectors (\ref{eq:PD}):
\begin{align}
  \hat{T}_j\left( a_j | \bm{a}^\prime \right) &= \Delta t R_j \left( a_j | \bm{a}^\prime \right).
\end{align}
According to Definition \ref{def:ZDS}, as far as ZD strategies are concerned, multiplying a constant to the Press-Dyson vectors does not change properties of ZD strategies.
Therefore, we can identify $\hat{T}_j$ and $R_j$.

By taking $\Delta t \rightarrow 0$ in Eq. (\ref{eq:MC}), we obtain
\begin{align}
  \frac{\partial P}{\partial t}\left( \bm{a}; t \right) &= \sum_{j=1}^N \sum_{a^\prime_j} R_j\left( a_j | a^\prime_j, a_{-j} \right) P\left( a^\prime_j, a_{-j}; t \right),
  \label{eq:CTMP}
\end{align}
where $P\left( \bm{a}; t \right)$ is the probability distribution of $\bm{a}$ at real time $t\in \mathbb{R}$.
We remark that the transition rate satisfies
\begin{align}
  \sum_{a_j} R_j \left( a_j | \bm{a}^\prime \right) &= 0 \quad (\forall j, \forall \bm{a}^\prime ),
\end{align}
due to the normalization condition of Eq. (\ref{eq:rate}).

Next, we assume that player $j$ uses a ZD strategy, and consider the continuous space limit of Eq. (\ref{eq:CTMP}) with respect to $a_j$.
When $R_j$ is given by the ZD strategy (\ref{eq:strategy_physZD}) with $1/W$ replaced by $\rho>0$, the $j$-th term in the right-hand side of Eq. (\ref{eq:CTMP}) is explicitly written as
\begin{align}
 & \sum_{a^\prime_j} R_j\left( a_j | a^\prime_j, a_{-j} \right) P\left( a^\prime_j, a_{-j}; t \right) \nonumber \\
 &= \rho \sum_{a^\prime_j} \left[ - \left| B\left( a^\prime_j, a_{-j} \right) \right| \delta_{a_j, a^\prime_j} + \left| B\left( a^\prime_j, a_{-j} \right) \right| \mathbb{I}\left( B\left( a^\prime_j, a_{-j} \right) < 0 \right)  \delta_{a_j+1, a^\prime_j} \right. \nonumber \\
 & \quad \left. + \left| B\left( a^\prime_j, a_{-j} \right) \right| \mathbb{I}\left( B\left( a^\prime_j, a_{-j} \right) > 0 \right)  \delta_{a_j-1, a^\prime_j} \right] P\left( a^\prime_j, a_{-j}; t \right) \nonumber \\
 &= \rho \left[ - \left| B\left( a_j, a_{-j} \right) \right| P\left( a_j, a_{-j}; t \right) \right. \nonumber \\
 & \quad + \left| B\left( a_j+1, a_{-j} \right) \right| \mathbb{I}\left( B\left( a_j+1, a_{-j} \right) < 0 \right)  P\left( a_j+1, a_{-j}; t \right) \nonumber \\
 & \quad \left. + \left| B\left( a_j-1, a_{-j} \right) \right| \mathbb{I}\left( B\left( a_j-1, a_{-j} \right) > 0 \right)  P\left( a_j-1, a_{-j}; t \right) \right]
 \label{eq:CSTMP_R1}
\end{align}
for $2\leq a_j \leq L_j-1$.
Here, we regard the length between two nearest neighbor sites in $a_j$-coordinate as $\Delta a$, and define $z_j := a_j \Delta a$.
Furthermore, we redefine the probability distribution of $(z_j, a_{-j})$ by $\tilde{P}(z_j, a_{-j}; t):=P(a_j, a_{-j}; t)/\Delta a$, and the quantity $B$ by $\tilde{B}(z_j, a_{-j}):=B(a_j, a_{-j})$.
We assume that the function $\tilde{B}$ is smooth with respect to $z_j$ in the limit $\Delta a \rightarrow 0$ with $\tilde{L}_j:=L_j\Delta a$ and $\tilde{\rho}:= \rho \Delta a$ fixed.
Then, Eq. (\ref{eq:CSTMP_R1}) is rewritten as
\begin{align}
 & \sum_{a^\prime_j} R_j\left( a_j | a^\prime_j, a_{-j} \right) \tilde{P}\left( a^\prime_j\Delta a, a_{-j}; t \right) \nonumber \\
 &= \rho \left[ - \left| \tilde{B}\left( z_j, a_{-j} \right) \right| \tilde{P}\left( z_j, a_{-j}; t \right) \right. \nonumber \\
 & \quad + \left| \tilde{B}\left( z_j+\Delta a, a_{-j} \right) \right| \mathbb{I}\left( \tilde{B}\left( z_j+\Delta a, a_{-j} \right) < 0 \right)  \tilde{P}\left( z_j+\Delta a, a_{-j}; t \right) \nonumber \\
 & \quad \left. + \left| \tilde{B}\left( z_j-\Delta a, a_{-j} \right) \right| \mathbb{I}\left( \tilde{B}\left( z_j-\Delta a, a_{-j} \right) > 0 \right) \tilde{P}\left( z_j-\Delta a, a_{-j}; t \right) \right]\nonumber \\
 &= \rho \left[ - \Delta a \frac{\partial}{\partial z_j} \left( \tilde{B}\left( z_j, a_{-j} \right) \tilde{P}\left( z_j, a_{-j}; t \right) \right) \right. \nonumber \\
 & \quad \left. + \frac{\Delta a^2}{2} \frac{\partial^2}{\partial z_j^2} \left( \left| \tilde{B}\left( z_j, a_{-j} \right) \right| \tilde{P}\left( z_j, a_{-j}; t \right) \right) + \mathcal{O}(\Delta a^3) \right] \\
 &\rightarrow -\tilde{\rho} \frac{\partial}{\partial z_j} \left( \tilde{B}\left( z_j, a_{-j} \right) \tilde{P}\left( z_j, a_{-j}; t \right) \right) \quad (\Delta a \rightarrow 0).
\end{align}
Therefore, in the limit $\Delta a \rightarrow 0$, we find that the dynamics of $z_j$ is described by the time evolution equation:
\begin{align}
 \dot{z}_j &= \tilde{\rho} \tilde{B}\left( z_j, a_{-j} \right).
\end{align}
Note that, when we calculate $\lim_{\tau \rightarrow \infty}1/\tau \int_0^\tau dt$ of both sides, this equation becomes
\begin{align}
 0 &= \tilde{\rho} \lim_{\tau \rightarrow \infty} \frac{1}{\tau} \int_0^\tau dt \tilde{B}\left( z_j(t), a_{-j}(t) \right)
\end{align}
because $0\leq z_j \leq \tilde{L}_j$.
This is why our ZD strategy unilaterally enforces $\left\langle B \right\rangle^{*}=0$ irrespective of strategies of other players.

\section{Existence of ZD strategies in two-player potential games}
\label{sec:ZDS_potential}
\subsection{Main result}
Next, we prove that ZD strategies controlling payoffs exist in two-player non-trivial potential games.
According to Proposition \ref{prop:interaction}, for two-player potential games, the payoffs and a potential is explicitly described as
\begin{align}
 s_1 \left( \bm{a} \right) &= \phi_{\{ 1 \}}\left( a_1 \right) + \phi_{\{ 1, 2 \}}\left( a_1, a_2 \right) \nonumber \\
 s_2 \left( \bm{a} \right) &= \phi_{\{ 2 \}}\left( a_2 \right) + \phi_{\{ 1, 2 \}}\left( a_1, a_2 \right)
 \label{eq:payoff_interaction}
\end{align}
and
\begin{align}
 \Phi \left( \bm{a} \right) &= \phi_{\emptyset} +  \phi_{\{ 1 \}}\left( a_1 \right) + \phi_{\{ 2 \}}\left( a_2 \right) + \phi_{\{ 1, 2 \}}\left( a_1, a_2 \right).
 \label{eq:potential_interaction}
\end{align}

\begin{theorem}
\label{thm:ZDS_potential}
For every two-player potential game with $\phi_{\{ 1 \}} \neq \mathrm{const.}$ or $\phi_{\{ 2 \}} \neq \mathrm{const.}$, there exists a ZD strategy controlling
\begin{align}
 s_1\left( \bm{a} \right) - s_2\left( \bm{a} \right) - \frac{\overline{\phi}_{\{ 1 \}} + \underline{\phi}_{\{ 1 \}}}{2} + \frac{\overline{\phi}_{\{ 2 \}} + \underline{\phi}_{\{ 2 \}}}{2},
\end{align}
where we have defined
\begin{align}
 \overline{\phi}_{\{ j \}} &:= \max_{a_j} \phi_{\{ j \}}\left( a_j \right) \quad (\forall j) \nonumber \\
 \underline{\phi}_{\{ j \}} &:= \min_{a_j} \phi_{\{ j \}}\left( a_j \right) \quad (\forall j).
 \label{eq:phibar}
\end{align}
\end{theorem}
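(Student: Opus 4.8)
The plan is to exploit the special structure of the controlled quantity once the interaction potential is substituted in. Writing $B(\bm{a})$ for the quantity in the statement and inserting the payoff decomposition (\ref{eq:payoff_interaction}), the common interaction term $\phi_{\{ 1, 2 \}}$ cancels in $s_1-s_2$, leaving
\begin{align}
 B(\bm{a}) &= \left( \phi_{\{ 1 \}}(a_1) - \frac{\overline{\phi}_{\{ 1 \}} + \underline{\phi}_{\{ 1 \}}}{2} \right) - \left( \phi_{\{ 2 \}}(a_2) - \frac{\overline{\phi}_{\{ 2 \}} + \underline{\phi}_{\{ 2 \}}}{2} \right). \nonumber
\end{align}
Thus $B$ is \emph{separable}: it is the difference of a function $g_1(a_1)$ of player $1$'s action alone and a function $g_2(a_2)$ of player $2$'s action alone, and the troublesome coupling term has disappeared. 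This is the crucial observation; everything else is a direct application of Proposition \ref{prop:existence}.

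Next I would record the ranges of the two centered functions. Because each $\phi_{\{ j \}}$ has been shifted by the midpoint of its maximum and minimum, $g_j$ attains exactly the extreme values $\tfrac{1}{2}(\overline{\phi}_{\{ j \}} - \underline{\phi}_{\{ j \}})$ and $-\tfrac{1}{2}(\overline{\phi}_{\{ j \}} - \underline{\phi}_{\{ j \}})$, and nothing outside this symmetric interval. To verify the existence condition (\ref{eq:condition_exsitence}) for player $1$, I would take $\overline{a}_1$ and $\underline{a}_1$ to be a maximizer and a minimizer of $\phi_{\{ 1 \}}$; then $B(\overline{a}_1, a_2) \geq \tfrac{1}{2}(\overline{\phi}_{\{ 1 \}} - \underline{\phi}_{\{ 1 \}}) - \tfrac{1}{2}(\overline{\phi}_{\{ 2 \}} - \underline{\phi}_{\{ 2 \}})$ for all $a_2$, with the reverse bound holding at $\underline{a}_1$. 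Hence player $1$ meets (\ref{eq:condition_exsitence}) whenever the oscillation $\overline{\phi}_{\{ 1 \}} - \underline{\phi}_{\{ 1 \}}$ is at least that of $\phi_{\{ 2 \}}$. By the same argument with the roles exchanged (the sign on $g_2$ simply swaps maximizer and minimizer), player $2$ meets (\ref{eq:condition_exsitence}) in the opposite case, with $\overline{a}_2,\underline{a}_2$ chosen as a minimizer and maximizer of $\phi_{\{ 2 \}}$.

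The two cases together exhaust all possibilities, so at least one player always satisfies the hypotheses of Proposition \ref{prop:existence}. The remaining check is that $B$ is not identically zero: under the assumption $\phi_{\{ 1 \}} \neq \mathrm{const.}$ or $\phi_{\{ 2 \}} \neq \mathrm{const.}$, at least one oscillation $\overline{\phi}_{\{ j \}} - \underline{\phi}_{\{ j \}}$ is strictly positive, so the corresponding $g_j$ is non-constant and $B$ cannot vanish everywhere. Invoking Proposition \ref{prop:existence} then produces the desired ZD strategy. I do not anticipate a serious obstacle: once the cancellation of $\phi_{\{ 1, 2 \}}$ is noticed, the argument is routine, the only mild subtlety being the case split that decides which player carries the strategy, governed by comparing the two oscillations, and the non-triviality assumption is precisely what guarantees $B \not\equiv 0$.
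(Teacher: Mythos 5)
Your proposal is correct and follows essentially the same route as the paper's proof: cancel $\phi_{\{ 1, 2 \}}$ in $s_1 - s_2$ to make $B$ separable, split into cases according to which oscillation $\overline{\phi}_{\{ j \}} - \underline{\phi}_{\{ j \}}$ is larger, and apply Proposition \ref{prop:existence} with the maximizer and minimizer of the relevant $\phi_{\{ j \}}$. The only cosmetic difference is that in the second case the paper works with $-B$ while you keep $B$ and swap the roles of maximizer and minimizer, which is equivalent.
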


\begin{proof}
We first define
\begin{align}
 \overline{a}_{j} &:= \arg\max_{a_j} \phi_{\{ j \}}\left( a_j \right) \quad (\forall j) \nonumber \\
 \underline{a}_{j} &:= \arg\min_{a_j} \phi_{\{ j \}}\left( a_j \right) \quad (\forall j),
 \label{eq:abar_potential}
\end{align}
where ties may be broken arbitrarily.
We consider the following two cases separately.
\begin{enumerate}[(i)]
 \item $\overline{\phi}_{\{ 1 \}} - \underline{\phi}_{\{ 1 \}} \geq \overline{\phi}_{\{ 2 \}} - \underline{\phi}_{\{ 2 \}}$\\
We consider the quantity
\begin{align}
 B \left( \bm{a} \right) &:= s_1\left( \bm{a} \right) - s_2\left( \bm{a} \right) - \frac{\overline{\phi}_{\{ 1 \}} + \underline{\phi}_{\{ 1 \}}}{2} + \frac{\overline{\phi}_{\{ 2 \}} + \underline{\phi}_{\{ 2 \}}}{2}.
\end{align}
By using the property of potential games (\ref{eq:payoff_interaction}), this quantity is rewritten as
\begin{align}
 B \left( \bm{a} \right) &= \phi_{\{ 1 \}}\left( a_1 \right) - \phi_{\{ 2 \}}\left( a_2 \right) - \frac{\overline{\phi}_{\{ 1 \}} + \underline{\phi}_{\{ 1 \}}}{2} + \frac{\overline{\phi}_{\{ 2 \}} + \underline{\phi}_{\{ 2 \}}}{2}.
\end{align}
Then we find that
\begin{align}
 B \left( \overline{a}_{1}, a_2 \right) &= \frac{\overline{\phi}_{\{ 1 \}} - \underline{\phi}_{\{ 1 \}}}{2} - \phi_{\{ 2 \}}\left( a_2 \right) + \frac{\overline{\phi}_{\{ 2 \}} + \underline{\phi}_{\{ 2 \}}}{2} \\
 B \left( \underline{a}_{1}, a_2 \right) &= - \frac{\overline{\phi}_{\{ 1 \}} - \underline{\phi}_{\{ 1 \}}}{2} - \phi_{\{ 2 \}}\left( a_2 \right) + \frac{\overline{\phi}_{\{ 2 \}} + \underline{\phi}_{\{ 2 \}}}{2}.
\end{align}
By the definition of the maximum and the minimum, we obtain
\begin{align}
 B \left( \overline{a}_{1}, a_2 \right) &\geq B \left( \overline{a}_{1}, \overline{a}_2 \right) \nonumber \\
 &= \frac{\overline{\phi}_{\{ 1 \}} - \underline{\phi}_{\{ 1 \}}}{2} - \frac{\overline{\phi}_{\{ 2 \}} - \underline{\phi}_{\{ 2 \}}}{2} \nonumber \\
 &\geq 0 \\
 B \left( \underline{a}_{1}, a_2 \right) &\leq B \left( \underline{a}_{1}, \underline{a}_2 \right) \nonumber \\
 &= - \frac{\overline{\phi}_{\{ 1 \}} - \underline{\phi}_{\{ 1 \}}}{2} + \frac{\overline{\phi}_{\{ 2 \}} - \underline{\phi}_{\{ 2 \}}}{2} \nonumber \\
 &\leq 0
\end{align}
for all $a_2$.
Furthermore, $B \left( \bm{a} \right)$ is identically zero only if $\phi_{\{ 1 \}}$ and $\phi_{\{ 2 \}}$ are constant functions.
Such situations are excluded by the assumption of theorem.
Therefore, the two actions $\overline{a}_{1}$ and $\underline{a}_{1}$ can be regarded as those in Proposition \ref{prop:existence}, and a ZD strategy of player $1$ exists.

 \item $\overline{\phi}_{\{ 1 \}} - \underline{\phi}_{\{ 1 \}} \leq \overline{\phi}_{\{ 2 \}} - \underline{\phi}_{\{ 2 \}}$\\
We consider the quantity
\begin{align}
 B \left( \bm{a} \right) &:= s_2\left( \bm{a} \right) - s_1\left( \bm{a} \right) - \frac{\overline{\phi}_{\{ 2 \}} + \underline{\phi}_{\{ 2 \}}}{2} + \frac{\overline{\phi}_{\{ 1 \}} + \underline{\phi}_{\{ 1 \}}}{2}.
\end{align}
By using the property of potential games (\ref{eq:payoff_interaction}), this quantity is rewritten as
\begin{align}
 B \left( \bm{a} \right) &= \phi_{\{ 2 \}}\left( a_2 \right) - \phi_{\{ 1 \}}\left( a_1 \right) - \frac{\overline{\phi}_{\{ 2 \}} + \underline{\phi}_{\{ 2 \}}}{2} + \frac{\overline{\phi}_{\{ 1 \}} + \underline{\phi}_{\{ 1 \}}}{2}.
\end{align}
Then we find that
\begin{align}
 B \left( a_1, \overline{a}_{2} \right) &= \frac{\overline{\phi}_{\{ 2 \}} - \underline{\phi}_{\{ 2 \}}}{2} - \phi_{\{ 1 \}}\left( a_1 \right) + \frac{\overline{\phi}_{\{ 1 \}} + \underline{\phi}_{\{ 1 \}}}{2} \\
 B \left( a_1, \underline{a}_{2} \right) &= - \frac{\overline{\phi}_{\{ 2 \}} - \underline{\phi}_{\{ 2 \}}}{2} - \phi_{\{ 1 \}}\left( a_1 \right) + \frac{\overline{\phi}_{\{ 1 \}} + \underline{\phi}_{\{ 1 \}}}{2}.
\end{align}
By the definition of the maximum and the minimum, we obtain
\begin{align}
 B \left( a_1, \overline{a}_{2} \right) &\geq B \left( \overline{a}_{1}, \overline{a}_2 \right) \nonumber \\
 &= \frac{\overline{\phi}_{\{ 2 \}} - \underline{\phi}_{\{ 2 \}}}{2} - \frac{\overline{\phi}_{\{ 1 \}} - \underline{\phi}_{\{ 1 \}}}{2} \nonumber \\
 &\geq 0 \\
 B \left( a_1, \underline{a}_{2} \right) &\leq B \left( \underline{a}_{1}, \underline{a}_2 \right) \nonumber \\
 &= - \frac{\overline{\phi}_{\{ 2 \}} - \underline{\phi}_{\{ 2 \}}}{2} + \frac{\overline{\phi}_{\{ 1 \}} - \underline{\phi}_{\{ 1 \}}}{2} \nonumber \\
 &\leq 0
\end{align}
for all $a_1$.
Furthermore, $B \left( \bm{a} \right)$ is identically zero only if $\phi_{\{ 1 \}}$ and $\phi_{\{ 2 \}}$ are constant functions.
Such situations are excluded by the assumption of theorem.
Therefore, the two actions $\overline{a}_{2}$ and $\underline{a}_{2}$ can be regarded as those in Proposition \ref{prop:existence}, and a ZD strategy of player $2$ exists.
\end{enumerate}
Thus, we conclude that at least one ZD strategy exists under the assumption.
We note that a linear relation enforced by the ZD strategy is 
\begin{align}
 \left\langle s_{1} \right\rangle^{*} - \frac{\overline{\phi}_{\{ 1 \}} + \underline{\phi}_{\{ 1 \}}}{2} &= \left\langle s_{2} \right\rangle^{*} - \frac{\overline{\phi}_{\{ 2 \}} + \underline{\phi}_{\{ 2 \}}}{2},
 \label{eq:linear_potential}
\end{align}
or equivalently,
\begin{align}
 \left\langle \phi_{\{ 1 \}} \right\rangle^{*} - \frac{\overline{\phi}_{\{ 1 \}} + \underline{\phi}_{\{ 1 \}}}{2} &= \left\langle \phi_{\{ 2 \}} \right\rangle^{*} - \frac{\overline{\phi}_{\{ 2 \}} + \underline{\phi}_{\{ 2 \}}}{2}.
 \label{eq:linear_potential_mod}
\end{align}
\end{proof}

We remark that for the case $\phi_{\{ 1 \}} = \mathrm{const.}$ and $\phi_{\{ 2 \}} = \mathrm{const.}$, the payoffs and the potential are
\begin{align}
 s_1 \left( \bm{a} \right) &= \phi_{1,0} + \phi_{\{ 1, 2 \}}\left( a_1, a_2 \right) \nonumber \\
 s_2 \left( \bm{a} \right) &= \phi_{2,0} + \phi_{\{ 1, 2 \}}\left( a_1, a_2 \right)
\end{align}
and
\begin{align}
 \Phi \left( \bm{a} \right) &= \phi_{0} + \phi_{\{ 1, 2 \}}\left( a_1, a_2 \right).
\end{align}
Such situation is called as a common interest game.
For common interest games, vectors $\{ \bm{s}_1, \bm{s}_2, \bm{1} \}$ are not linearly independent, where $\bm{1}$ is the vector of all ones.
Additional analysis is needed for investigating the existence of ZD strategies controlling payoffs for common interest games.
For example, a two-player two-action common interest game $\bm{s}_1=\bm{s}_2=\bm{\Phi}=(1, -1, -1, 1)^\mathsf{T}$ does not contain any ZD strategies controlling payoffs.

We also note that the linear relation (\ref{eq:linear_potential}) may be regarded as an extension of fairness condition $\left\langle s_{1} \right\rangle^{*} = \left\langle s_{2} \right\rangle^{*}$ in two-player symmetric potential games \cite{Ued2022}.
Therefore, such ZD strategies may be regarded as an extension of unbeatable strategies in two-player symmetric games to asymmetric cases.

\subsection{Asymmetric prisoner's dilemma}
As an example, we consider an asymmetric prisoner's dilemma game.
The game is described as $N=2$, $A_1=A_2=\{C, D\}$, and
\begin{align}
 \bm{s}_1 &:= \left( s_1(C,C), s_1(C,D), s_1(D,C), s_1(D,D) \right)^\mathsf{T} = \left( R_1, S_1, T_1, P_1 \right)^\mathsf{T} \nonumber \\
 \bm{s}_2 &:= \left( s_2(C,C), s_2(C,D), s_2(D,C), s_2(D,D) \right)^\mathsf{T} = \left( R_2, T_2, S_2, P_2 \right)^\mathsf{T}
\end{align}
with $T_j>R_j>P_j>S_j$ $(j=1,2)$.
(The actions $C$ and $D$ imply cooperation and defection, respectively.)
If this game is a potential game, according to Definition \ref{def:potential}, a potential function $\Phi$ satisfies
\begin{align}
 R_1 - T_1 &= \Phi(C,C) - \Phi(D,C) \nonumber \\
 S_1 - P_1 &= \Phi(C,D) - \Phi(D,D) \nonumber \\
 R_2 - T_2 &= \Phi(C,C) - \Phi(C,D) \nonumber \\
 S_2 - P_2 &= \Phi(D,C) - \Phi(D,D).
\end{align}
Therefore, for such $\Phi$ to exist, the payoffs must satisfy
\begin{align}
 R_1 - T_1 - S_1 + P_1 &= R_2 - T_2 - S_2 + P_2.
\end{align}
Under this condition, the potential is determined as
\begin{align}
 \Phi(C,C) &= \Phi_0 \nonumber \\
 \Phi(C,D) &= \Phi_0 + T_2 - R_2 \nonumber \\
 \Phi(D,C) &= \Phi_0 + T_1 - R_1 \nonumber \\
 \Phi(D,D) &= \Phi_0 + T_2 - R_2 + P_1 - S_1,
\end{align}
where $\Phi_0$ is an arbitrary constant.

According to Eqs. (\ref{eq:payoff_interaction}) and (\ref{eq:potential_interaction}),
\begin{align}
 \Phi \left( \bm{a} \right) - s_1\left( \bm{a} \right) &= \phi_{\emptyset} + \phi_{\{ 2 \}}\left( a_2 \right) \nonumber \\
 \Phi \left( \bm{a} \right) - s_2\left( \bm{a} \right) &= \phi_{\emptyset} + \phi_{\{ 1 \}}\left( a_1 \right).
\end{align}
It should be noted that such property that the right-hand sides depend only on the opponent's action is called \emph{coordination-dummy separability} \cite{LCS2016}.
Then, we obtain
\begin{align}
 \phi_{\{ 1 \}}\left( C \right) &= \Phi_0 - R_2 - \phi_{\emptyset} \nonumber \\
 \phi_{\{ 1 \}}\left( D \right) &= \Phi_0 + T_1 - R_1 - S_2 - \phi_{\emptyset} \nonumber \\
 \phi_{\{ 2 \}}\left( C \right) &= \Phi_0 - R_1 - \phi_{\emptyset} \nonumber \\
 \phi_{\{ 2 \}}\left( D \right) &= \Phi_0 + T_2 - R_2 - S_1 - \phi_{\emptyset}.
\end{align}
By using the definition (\ref{eq:phibar}) and (\ref{eq:abar_potential}), we find $\overline{a}_{1}=\overline{a}_{2}=D$, $\underline{a}_{1}=\underline{a}_{2}=C$, and
\begin{align}
 \overline{\phi}_{\{ 1 \}} &= \Phi_0 + T_1 - R_1 - S_2 - \phi_{\emptyset} \nonumber \\
 \underline{\phi}_{\{ 1 \}} &= \Phi_0 - R_2 - \phi_{\emptyset} \nonumber \\
 \overline{\phi}_{\{ 2 \}} &= \Phi_0 + T_2 - R_2 - S_1 - \phi_{\emptyset} \nonumber \\
 \underline{\phi}_{\{ 2 \}} &= \Phi_0 - R_1 - \phi_{\emptyset}.
\end{align}
Therefore, according to Theorem \ref{thm:ZDS_potential}, a ZD strategy unilaterally enforcing
\begin{align}
 \left\langle s_{1} \right\rangle^{*} - \frac{T_1+S_1}{2} &= \left\langle s_{2} \right\rangle^{*} - \frac{T_2+S_2}{2}
\end{align}
exists.

We remark that, as shown in the proof of Theorem \ref{thm:ZDS_potential}, which player can use the ZD strategy depends on the values $\overline{\phi}_{\{ 1 \}} - \underline{\phi}_{\{ 1 \}}$ and $\overline{\phi}_{\{ 2 \}} - \underline{\phi}_{\{ 2 \}}$.
For the ZD strategy of player $1$ to exist, the inequality $\overline{\phi}_{\{ 1 \}} - \underline{\phi}_{\{ 1 \}} \geq \overline{\phi}_{\{ 2 \}} - \underline{\phi}_{\{ 2 \}}$ must hold, which is equivalent to
\begin{align}
 R_2 - P_2 &\geq R_1 - P_1.
 \label{eq:ARPD_ZDS1}
\end{align}
Similarly, for the ZD strategy of player $2$ to exist, the inequality
\begin{align}
 R_1 - P_1 &\geq R_2 - P_2
 \label{eq:ARPD_ZDS2}
\end{align}
must hold.
The concrete transition probability of the ZD strategy is constructed by using Theorem \ref{thm:physicalZDS}.
Explicitly, when the inequality (\ref{eq:ARPD_ZDS1}) holds, the ZD strategy of player $1$ is given by
\begin{align}
 \bm{T}_1(C) &:= \left(
\begin{array}{c}
T_1(C|C,C)  \\
T_1(C|C,D)  \\
T_1(C|D,C)  \\
T_1(C|D,D)
\end{array}
\right) \nonumber \\
&= \left(
\begin{array}{c}
1  \\
1  \\
0  \\
0
\end{array}
\right)
+ \frac{1}{W} \left(
\begin{array}{c}
\frac{R_1-P_1-R_2+P_2}{2}  \\
\frac{S_1-T_1-T_2+S_2}{2}  \\
\frac{T_1-S_1-S_2+T_2}{2}  \\
\frac{P_1-R_1-P_2+R_2}{2}
\end{array}
\right),
\end{align}
where
\begin{align}
 W &= \max \left\{ \left| \frac{R_1-P_1-R_2+P_2}{2} \right|, \left| \frac{S_1-T_1-T_2+S_2}{2} \right| \right\}.
\end{align}
It should be noted that, for the symmetric case $\left( R_1, S_1, T_1, P_1 \right)=\left( R_2, S_2, T_2, P_2 \right)$, this ZD strategy is reduced to the Tit-for-Tat strategy \cite{PreDys2012}.
The ZD strategy of player $2$ when the inequality (\ref{eq:ARPD_ZDS2}) holds is obtained similarly.

\section{Discussion}
\label{sec:discussion}
In this section, we discuss a relation between our Theorem \ref{thm:physicalZDS}  and previous studies.
In Ref. \cite{McAHau2016}, McAvoy and Hauert proved the following proposition, which holds even if the action spaces are arbitrary (not limited to finite sets).
\begin{proposition}[\cite{McAHau2016}]
\label{prop:two-point}
Suppose that there exist two actions $\underline{a}_j, \overline{a}_j \in A_j$ and $W>0$ such that
\begin{align}
 -W &\leq B\left( \underline{a}_j, a_{-j} \right) \leq 0 \quad (\forall a_{-j} \in A_{-j}) \nonumber \\
 0 &\leq B\left( \overline{a}_j, a_{-j} \right) \leq W \quad (\forall a_{-j} \in A_{-j}).
 \label{eq:existence_condition_MH}
\end{align}
Then, when we restrict the action set of player $j$ from $A_j$ to $A_j^\prime:=\left\{ \underline{a}_j, \overline{a}_j \right\}$, the memory-one strategy of player $j$
\begin{align}
 T_j\left( a_j | \bm{a}^\prime \right) &= p\left( \bm{a}^\prime \right) \delta_{a_j, \underline{a}_j} + \left( 1-p\left( \bm{a}^\prime \right) \right) \delta_{a_j, \overline{a}_j}
 \label{eq:transition_two-point}
\end{align}
with
\begin{align}
 p\left( \underline{a}_j, a^\prime_{-j} \right) &:= \frac{1}{W} B\left( \underline{a}_j, a_{-j} \right) + 1 \quad (\forall a_{-j} \in A_{-j}) \nonumber \\
 p\left( \overline{a}_j, a^\prime_{-j} \right) &:= \frac{1}{W} B\left( \overline{a}_j, a_{-j} \right) \quad (\forall a_{-j} \in A_{-j})
 \label{eq:p_two-point}
\end{align}
is a ZD strategy unilaterally enforcing $\left\langle B \right\rangle^*=0$.
\end{proposition}
Indeed, we can find
\begin{align}
 \hat{T}_j\left( \underline{a}_j | \bm{a}^{\prime} \right) &= \frac{1}{W} B \left( \bm{a}^{\prime} \right) \quad \left( \forall a^\prime_j \in A_j^\prime, \forall a^{\prime}_{-j} \in A_{-j} \right).
\end{align}
They called such ZD strategies \emph{two-point autocratic strategies}.

Interestingly, the condition (\ref{eq:existence_condition_MH}) is exactly the same as Eq. (\ref{eq:condition_exsitence}), which is a necessary and sufficient condition for the existence of ZD strategies when an action space of each player is a finite set \cite{Ued2022b}.
Because two-point autocratic strategies are ZD strategies with $A_j$ consisting of two actions, they satisfy the necessary condition in Ref. \cite{Ued2022b}.
Therefore, the condition (\ref{eq:existence_condition_MH}) is also a necessary and sufficient condition for the existence of two-point autocratic strategies when all action spaces are finite sets.
These facts are summarized to the following theorem.
\begin{theorem}
\label{thm:ZDS_equlvalence}
If $A_k$ is a finite set for all $k\in \mathcal{N}$, the following three conditions are equivalent.
\begin{enumerate}[(i)]
 \item A ZD strategy of player $j$ controlling $B$ exists.
 \item A two-point autocratic strategy of player $j$ controlling $B$ exists.
 \item There exist two different actions $\overline{a}_j, \underline{a}_j \in A_j$ of player $j$ such that
\begin{align}
 B \left( \overline{a}_j, a_{-j} \right) &\geq 0 \quad \left( \forall a_{-j} \right) \nonumber \\
 B \left( \underline{a}_j, a_{-j} \right) &\leq 0 \quad \left( \forall a_{-j} \right),
\end{align}
and $B$ is not identically zero.
\end{enumerate}
\end{theorem}

By definition, two-point autocratic strategies are one-dimensional.
However, it should be noted that a strategy (\ref{eq:transition_two-point}) is not a ZD strategy in the sense of Definition \ref{def:ZDS}, since Eq. (\ref{eq:ZDS}) holds not for $\forall \bm{a}^{\prime} \in \mathcal{A}$, but for $\forall a^\prime_j \in A_j^\prime, \forall a^{\prime}_{-j} \in A_{-j}$.
Theorem \ref{thm:physicalZDS} claims that, when a ZD strategy exists, it can be expressed by one-dimensional transition probability using the whole action space $A_j$.
In Ref. \cite{McAHau2016}, it was pointed out that, if we use two-point autocratic strategies, feasible payoff regions are restricted because a two-point autocratic strategy uses only two actions.
Our one-dimensional ZD strategies can be useful for bypassing the problem of shrinkage of feasible payoff regions.

\section{Concluding remarks}
\label{sec:conclusion}
In this paper, we proved two theorems (Theorems \ref{thm:physicalZDS} and \ref{thm:ZDS_potential}) which can be useful for implementation of ZD strategies in repeated games.
The first theorem shows that ZD strategies, if exist, can be realized by one-dimensional transition probability.
This result can be useful when an action space has one-dimensional structure.
The second theorem claims that ZD strategies controlling payoffs exist in two-player non-trivial potential games.
We explained this result in the asymmetric prisoner's dilemma game as an example.

Before ending this paper, we make two remarks.
The first remark is related to extension of Theorem \ref{thm:ZDS_potential} to the case $N>2$.
Since the proof of Theorem \ref{thm:ZDS_potential} relies on the fact $N=2$, we cannot straightforwardly extend its proof to $N>2$ case.
Existence of ZD strategies controlling payoffs in potential games may be a specific result in $N=2$ case.
Further investigation is needed to clarify whether extension of our result to $N>2$ is possible or not.

The second remark is related to the implication of Theorem \ref{thm:physicalZDS}.
The form of the transition probability (\ref{eq:strategy_physZD_comp}) is simple, and, as discussed in Section \ref{sec:physicalZDS}, each element depends only on the quantity at the previous action profile $\bm{a}^\prime$.
Although this property is similar to that of the trap model, the stationary probability distribution cannot be written as a simple form, because the total dynamics is determined by strategies of all players.
We would like to investigate mathematical properties of the transition probability (\ref{eq:strategy_physZD_comp}) in future.
Furthermore, specifying situations where one-dimensional implementation (\ref{eq:strategy_physZD_comp}) of a ZD strategy is necessary is an interesting future problem.

\section*{Acknowledgement}
This study was supported by JSPS KAKENHI Grant Number JP20K19884 and Inamori Research Grants.



\section*{References}

\bibliography{ZDSphys}

\end{document}